\newtheorem{theorem}{Theorem}[section]
\newtheorem{proposition}[theorem]{Proposition}
\newtheorem{corollary}[theorem]{Corollary}
\newtheorem{example}[theorem]{Example}
\newcommand{\pr}{ \noindent{\bf Proof:}\quad }
\newcommand{\qed}{$\Box$}
\newenvironment{proof}{\noindent {\pr}\ }{\qed}
\newcommand{\Z}{{\mathbb{Z}}}
\newcommand{\F}{{\mathbb{F}}}
\newcommand{\C}{{\cal C}}
\newcommand{\G}{{\cal G}}
\newcommand{\SG}{{\cal S}}
\renewcommand{\H}{{\cal{H}}}
\newcommand{\zero}{{\mathbf{0}}}
\newcommand{\add}{\Z_2\Z_4}
\newcommand{\zz}{\Z_2^\alpha\times\Z_4^\beta}
\newcommand{\type}{{(\alpha,\beta;\gamma,\delta;\kappa)}}
\newcommand{\pes}{{{\rm wt}}}
\newcommand{\pesL}{{\omega t_L}}
\begin{document}

\title{Permutation decoding of $\add$-linear codes\thanks{This work was partially supported by the
Spanish MICINN under Grants MTM2009-08435 and TIN2010-17358, and
by the Catalan AGAUR under Grant 2009SGR1224.}}

\author{Jos\'{e} Joaqu\'{i}n Bernal,\thanks{
Dept. Matem\'{a}ticas, Universidad de Murcia, Spain.}
Joaquim Borges, Cristina Fern\'{a}ndez-C\'{o}rdoba,\and Merc\`{e}
Villanueva\thanks{
Dept. Information and Communications Engineering, Universitat Aut\`{o}noma de
Barcelona, Spain.}}

\maketitle

\begin{abstract}
An alternative permutation decoding method is described which can be used for
any binary systematic encoding scheme, regardless whether the code is linear or not.
Thus, the method can be applied to some important codes such as $\add$-linear
codes, which are binary and, in general, nonlinear codes in the usual sense. For
this, it is proved that these codes allow a systematic encoding scheme.
As a particular example, this permutation decoding method is applied to some Hadamard $\add$-linear codes.
\end{abstract}

\section{Introduction}\label{introduction}

We denote by $\F^n$ the set of all binary vectors of length $n$ and by $\pes(v)$
the {\em (Hamming) weight} of any vector $v\in\F^n$, that is, the number of its
nonzero coordinates. The {\em (Hamming) distance} between two vectors
$u,v\in\F^n$ is defined as $d(u,v)=\pes(u+v)$. Given a binary code of length
$n$, $C\subseteq \F^n$, we denote by $d_C$ its {\em minimum distance}, that is,
the minimum distance between any pair of different codewords in $C$. We say that $C$ is a
$t$-{\em error-correcting} code, where $t=\lfloor (d_C-1)/2 \rfloor$.

For a vector $v\in\F^n$ and a set $I\subseteq \{1,\dots,n\}$, $|I|=k$, we define $v_I\in\F^k$ as the vector $v$ restricted to the $I$ coordinates. For example,
if $I=\{1,\ldots,k\}$ and $v=(v_1,\ldots,v_n)$, then $v_I=(v_1,\ldots,v_k)$. If
$C$ is a binary code of length $n$, then $C_I=\{v_I: v\in C\}$.

If $C$ has size $|C|=2^k$, then $C$ is a {\em systematic} code if there is a set
$I \subseteq \{1,\ldots,n\}$ of $k$ coordinate positions such that
$|C_I|=2^k$. In other words, $C_I$ is $\F^k$.
Such a set $I$ is also referred to as a set of {\em systematic coordinates} or an
{\em information set}. Given a systematic code of size $\left|C\right|=2^k$ with information set $I$, a {\em systematic encoding for $I$}
is a one-to-one map $f:\;\F^k\;\rightarrow\;\F^n$, such that for any
information vector $a\in\F^k$, the corresponding codeword $f(a)$ satisfies that $f(a)_I=a$.

Let us consider the group of permutations on $n$ symbols, $\SG_n$, acting on $\F^n$ by permuting the coordinates of each vector. That is, for every $v=(v_1,\dots,v_n)\in \F^n$ and $\pi \in \SG_n$, $\pi(v_1,\dots,v_n)=(v_{\pi^{-1}(1)},\dots,v_{\pi^{-1}(n)})$. Then, for any binary code $C$, we denote by $\mbox{PAut}(C)$ its {\em permutation automorphism group}, i.e., $\mbox{PAut}(C)=\{\pi \in \SG_n: \pi(C)=C\}$. Moreover, a binary code $C'$ is said to be {\em permutation equivalent} to $C$ if there exists $\pi\in \SG_n$ such that $\pi(C)=C'$.

Not every binary code of size $2^k$ is systematic, but every binary linear code is systematic. Indeed, if $C\subseteq \F^n$ is a binary linear code of dimension $k$, it is permutation equivalent to a code with generator and parity check matrices:
\begin{equation}\label{eq:matrius}
G=\left(
    \begin{array}{cc}
      Id & A \\
    \end{array}
  \right)
  \;,\;\;\;\;
H=\left(
    \begin{array}{cc}
      A^T & Id \\
    \end{array}
  \right),
\end{equation}
  where $Id$ denotes the identity matrix and $A^T$ is the transpose of $A$. In general, for any information set $I$, we say that a generator (resp. parity check) matrix is in {\em standard form} if the columns in  the positions inside (resp. outside of) $I$ are the columns of $Id$. Then the map $f:\F^k\rightarrow \F^n$ given by
  \begin{equation}\label{linearencoding}
   f(v)=v\ G,
  \end{equation}
  for any $v\in \F^k$, is clearly a systematic encoding.
\medskip

Permutation decoding was introduced in \cite{Prange} and
\cite{McW}. A description of the standard method for linear codes can be found
in \cite[p.513]{MS}. Given a $t$-error-correcting linear code $C\subseteq\F^n$ with fixed information set $I$, we consider $y=x+e$ the received vector, where $x\in C$ and $e$ is the error vector. We assume that $y$ has less than $t+1$ errors, that is, $\pes(e)\leq t$. The idea of permutation decoding is to use the elements of $\mbox{PAut}(C)$ in order to move the non-zero coordinates of $e$ out of $I$. So, on the one hand the method is based on the existence of some special subsets $S\subseteq \mbox{PAut}(C)$, called PD-sets, verifying that for any vector $e\in\F^n$, with $\pes(e)\leq t$, there is an element $\pi\in S$ such that
$\pes(\pi(e)_I)=0$.
On the other hand, the main tool of this decoding algorithm is the following
theorem which gives us a necessary and sufficient condition for a received
vector $y\in \F^n$ having its systematic coordinates correct.

\begin{theorem}[\cite{MS}]\label{teosindrome}
Let $C$ be a $t$-error-correcting linear code with information set $I$ and parity check matrix $H$ in standard form. Let $y=x+e$, where $x\in C$ and $e$ verifies that ${\rm wt}(e)\leq t$. Then
\begin{equation}\label{eq:pes}
{\rm wt}(Hy^T)={\rm wt}(He^T)\leq t\;\;\Longleftrightarrow\;\;{\rm wt}(e_I)=0.
\end{equation}
\end{theorem}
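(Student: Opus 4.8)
The plan is to first dispose of the equality ${\rm wt}(Hy^T)={\rm wt}(He^T)$, which is immediate: since $x\in C$ and $H$ is a parity check matrix for $C$, we have $Hx^T=\zero$, so $Hy^T=H(x+e)^T=Hx^T+He^T=He^T$. This reduces the theorem to the equivalence ${\rm wt}(He^T)\leq t\Longleftrightarrow {\rm wt}(e_I)=0$. Throughout I would exploit the hypothesis that $H$ is in standard form, which means precisely that the columns of $H$ indexed by $J=\{1,\dots,n\}\setminus I$ form an identity matrix of order $n-k$.

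For the implication $(\Leftarrow)$, I would assume ${\rm wt}(e_I)=0$, i.e. that $e$ is supported on $J$. Expanding $He^T$ as a linear combination of the columns of $H$, the contribution of the coordinates in $I$ vanishes, and the standard form makes the remaining sum a mere rearrangement of the entries of $e_J$. Hence ${\rm wt}(He^T)={\rm wt}(e_J)={\rm wt}(e)$, which is at most $t$ by hypothesis, settling this direction.

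The substantive direction is $(\Rightarrow)$. Assuming ${\rm wt}(He^T)\leq t$, the key idea is to build an auxiliary vector $w\in\F^n$ that shares the syndrome of $e$ but is forced to live on $J$: set $w_I=\zero$ and let $w_J$ be the transpose of the syndrome $He^T$. Because $H$ is in standard form, one checks directly that $Hw^T=He^T$, so $H(w+e)^T=\zero$ and thus $w+e\in C$. Now ${\rm wt}(w)={\rm wt}(He^T)\leq t$ while ${\rm wt}(e)\leq t$, so subadditivity of the Hamming weight gives ${\rm wt}(w+e)\leq 2t<2t+1\leq d_C$ (the last inequality since $t=\lfloor(d_C-1)/2\rfloor$). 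A nonzero codeword cannot have weight below $d_C$, so $w+e=\zero$, i.e. $e=w$; since $w_I=\zero$ this yields ${\rm wt}(e_I)=0$, as desired.

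The only delicate point is the step in $(\Rightarrow)$ that converts a bound on the \emph{syndrome} weight into a statement about the minimum distance. One must verify that the auxiliary vector $w$ genuinely has the same syndrome as $e$ (this is exactly where the standard form of $H$ is indispensable, since it lets us realize any target syndrome by a vector supported on $J$) and that the weight bound ${\rm wt}(w+e)\leq 2t$ falls strictly below $d_C\geq 2t+1$, forcing $w+e$ to be the zero codeword. Everything else is bookkeeping with the block structure of $H$ induced by the partition $\{I,J\}$.
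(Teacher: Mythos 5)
Your proof is correct. Note that the paper itself gives no proof of this theorem --- it is quoted from MacWilliams--Sloane --- so there is nothing internal to compare against; your argument is a complete, self-contained derivation. The one substantive step, the implication $(\Rightarrow)$, is handled soundly: because the columns of $H$ outside $I$ form an identity matrix, any syndrome is realized by a vector $w$ supported on $J=\{1,\dots,n\}\setminus I$ with ${\rm wt}(w)={\rm wt}(He^T)$, and then $w+e$ is a codeword of weight at most $2t<d_C$, hence zero. This is logically equivalent to, but phrased differently from, the classical textbook argument, which runs the contrapositive: if $e_I\neq 0$ one writes the syndrome as $(e_IA)^T+e_J^T$, observes that $(e_I,e_IA)$ is a nonzero codeword of weight at least $2t+1$, and concludes ${\rm wt}(He^T)\geq 2t+1-{\rm wt}(e)>t$. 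Your version trades that weight-counting inequality for an appeal to the minimum distance via an explicit coset representative, which is arguably cleaner; both uses of the standard form of $H$ and of the bound $2t+1\leq d_C$ are exactly where the hypotheses are needed, and you have identified them correctly.
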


Then, let $C\subseteq \F^n$ be a linear code with information set $I$ and parity check matrix $H$ in standard form. Assume that we have found a PD-set for the information set $I$,  $S\subseteq \mbox{PAut}(C)$, and denote by $y=x+e$ the received vector, where $x \in C$ and $e$ is the error vector. Then the permutation decoding algorithm works as follows:
\begin{enumerate}
\item If $\pes(H y^T)\leq t$, then the systematic coordinates of $y$
are correct and we can recover $x$ from (\ref{linearencoding}).
\item Else, we search $\pi\in S$ such that
$\pes(H \pi(y)^T)\leq t$. If there is
no such $\pi$, we conclude that more than $t$ errors have occurred.
\item If we have successfully found $\pi$, we take $x'$ the unique codeword such that $x'_I=\pi(y)_I$. Then the decoded vector is
$\pi^{-1}\left(x'\right)$.
\end{enumerate}

In this paper, we show that $\Z_2\Z_4$-linear codes are also systematic.
Moreover, we give a systematic encoding method.
However, for $\add$-linear codes, Theorem~\ref{teosindrome}
holds just in some obvious cases, not in general. Nevertheless, we give an alternative method for permutation
decoding which does not need (\ref{eq:pes}). Such method does not use the
syndrome $Hy^T$ to check whether the systematic coordinates are correct or not.
Therefore, the method
can be used for general $\Z_2\Z_4$-linear codes, of course, assuming that we
know an appropriate PD-set.

\medskip

The paper is organized as follows. In Section \ref{sec:systematic}, we show that
any $\add$-linear code is systematic. Moreover, we give an information set and a
systematic encoding for that information set. We also see under which
conditions the standard permutation decoding method works for $\add$-linear
codes. In Section \ref{sec:method}, we present the alternative permutation decoding
method. Such method does not use the syndrome of a received vector in order to
check whether the systematic coordinates are correct or not. Hence, it is not
important if Theorem~\ref{teosindrome} holds or not.
We use the new method for the particular case of  Hadamard $\add$-linear codes. These are, in general, nonlinear codes in the binary sense, but they have high error-correcting capability.

\section{Systematic encoding for $\Z_2\Z_4$-linear codes}\label{sec:systematic}

For every pair $n_1$, $n_2$ of nonnegative integers, define the componentwise Gray map $\Phi: \;\Z_2^{n_1}\times\Z_4^{n_2} \;\longrightarrow\; \F^{n_1+2n_2}$ as
$$\begin{array}{lc}
\Phi(x,y) = (x,\phi(y_1),\ldots,\phi(y_{n_2}))\\
 \hspace{1truecm}\forall x\in\Z_2^{n_1},\;\forall y=(y_1,\ldots,y_{n_2})\in \Z_4^{n_2};
\end{array}$$
where $\phi:\Z_4 \longrightarrow \Z_2^{2}$ is the usual Gray map,
that is, $$\phi(0)=(0,0),\ \phi(1)=(0,1), \ \phi(2)=(1,1), \
\phi(3)=(1,0).$$

The \textit{Lee weights} over the elements in $\Z_4$ are defined as $\pesL(0)=0, \pesL(1)=\pesL(3)=1, \pesL(2)=2$. Then the \textit{Lee weight} of a vector $u=(u_1,\dots,u_{n_1+n_2})\in \Z_2^{n_1}\times\Z_4^{n_2}$ is defined as $\pesL(u)=\pes(u_1,\dots,u_{n_1})+\sum_{i=1}^{n_2}\pesL(u_{n_1+i})$. Finally the \textit{Lee distance} between two vectors $u, v\in \Z_2^{n_1}\times\Z_4^{n_2}$ is $d_L(u,v)=\pesL(u-v)$. The Gray map is an isometry which transforms Lee distances to Hamming distances.

Let $C$ be a $\add$-linear code of type $\type$, length $n=\alpha+2\beta$ and size $|C|=2^k=2^{\gamma+2\delta}$ \cite{Z2Z4Lineals}. As usual, denote by $\C$ the corresponding $\add$-additive code, i.e. $\C=\Phi^{-1}(C)$.
If $\C$ is a $\add$-additive code of type $\type$, it is permutation equivalent to a $\add$-additive code
with generator matrix as follows \cite{Z2Z4Lineals}:
\begin{equation}\label{eq:standard}
\G = \left ( \begin{array}{cc|ccc}
I_{\kappa} & T_b & 2T_2 & \zero & \zero\\
\zero & \zero & 2T_1 & 2I_{\gamma-\kappa} & \zero\\
\hline \zero & S_b & S_q & R & I_{\delta} \end{array} \right ),
\end{equation}
\noindent where $T_b, S_b$ are matrices over
$\Z_2$;  $T_1, T_2, R$ are matrices over $\Z_4$ with all entries in $\{0,1\}\subset \Z_4$; and $S_q$ is a matrix
over $\Z_4$. We say that $\G$ is a matrix in standard form for a $\add$-additive code.

Given two vectors $u,v\in \zz$, the inner product is defined as in \cite{Z2Z4Lineals}:
$$ \langle u,v \rangle=2(\sum_{i=1}^{\alpha} u_iv_i)+\sum_{j=\alpha+1}^{\alpha+\beta}
u_jv_j \ \in \ \Z_4,$$ where the computations are made taking the zeros and ones in the $\alpha$ binary coordinates as quaternary zeros and ones, respectively. The additive dual code of $\C$, denoted by $\mathcal{C}^\perp$, is then defined in the standard way:
$$
\mathcal{C}^\perp=\{y\in\zz\mid \langle x,y \rangle=0, \mbox{ for all } x\in {\cal C}\}.
$$
It is also shown in  \cite{Z2Z4Lineals} that if $\C$ has a generator matrix in standard form
(\ref{eq:standard}), then $\C^\perp$ can be generated by the matrix:
\begin{equation}\label{eq:H}
\H=\left(
\begin{array}{cc|ccc}
T_b^{t} & I_{\alpha-\kappa} & \zero &  \zero & 2S_b^{t}\\
\zero & \zero & \zero & 2I_{\gamma-\kappa}& 2R^{t}\\
\hline T_2^{t} &\zero & I_{\beta+\kappa-\gamma-\delta} &  T_1^{t} & -\big(
S_q+RT_1\big)^{t} \end{array} \right),
\end{equation}
which also represents a parity check matrix for $\C$.
Moreover, $\mathcal{C}^\perp$ is a
$\add$-additive code of type
$(\alpha,\beta;\bar{\gamma},\bar{\delta};\bar{\kappa})$, where
\begin{equation}
\label{dualtype}
\begin{array}{l}
\bar{\gamma} = \alpha + \gamma - 2\kappa,\\
\bar{\delta} =\beta - \gamma - \delta + \kappa,\\
\bar{\kappa}=\alpha-\kappa.
\end{array}
\end{equation}

\medskip
There are some cases where the systematic encoding of $C$ is clear.
First, when $C$ is linear, we can apply simply the standard systematic encoding
for linear codes by considering the generator matrix $G$ of $C$ as in
(\ref{eq:matrius}). It is clear that $C$ is linear, for
example, when $\beta=0$ and also when $\delta=0$. In general, if
$\G$ is the generator matrix of a $\C$ as in (\ref{eq:standard}) and
$\{u_i\}_{i=1}^{\gamma}$ and $\{v_j\}_{j=1}^{\delta}$ are the row vectors of
order two and order four in $\G$, respectively, then $C$ is a binary linear
code if and only if $2v_j*v_k \in \C$, for all $j, k$ satisfying $1\leq j < k
\leq \delta$, where $*$ is the component-wise product (see \cite{FPV}).

The second case is when $\gamma=\kappa$. If we consider the generator and the
parity check
matrix of a code $\C$ with $\gamma=\kappa$, then we obtain
\begin{equation}\label{eq:free}
\begin{array}{cc}
\G = \left ( \begin{array}{cc|cc}
I_{\kappa} & T_b & 2T_2 &  \zero\\
\hline \zero & S_b & S_q &  I_{\delta}
\end{array} \right ),
&
\H=\left(
\begin{array}{cc|cc}
T_b^{t} & I_{\alpha-\kappa} &  \zero & 2S_b^{t}\\
\hline T_2^{t} &\zero & I_{\beta+\kappa-\gamma-\delta} & -S_q^{t} \end{array}
\right).
\end{array}
\end{equation}
It is clear that for any information vector $(u,v)\in \Z_2^\gamma\times\Z_4^\delta$, we
have that $(u,v)\G=(u,z,v)$ for some $z\in \Z_2^{\alpha-\gamma}\times
\Z_4^{\beta-\delta}$ and, therefore, the
set $I=\{1,\dots,\kappa,\alpha+\beta-\delta+1,\dots,\alpha+\beta\}$ is a set of
systematic coordinates. Hence, we have the following systematic
encoding:
\begin{equation} \label{eq:encoding0}
f(a)=\Phi\left(\Phi^{-1}(a)\G\right),\;\;\;\;\forall a\in\F^k.
\end{equation}

Even though in those cases the systematic encoding is clear, we can not use the
same method to $\add$-linear codes in general. We are going to define a method
that use the $\add$-linearity of the code and can be used for all
values of $\alpha,\beta,\gamma,\delta$ and $\kappa$.

\medskip
Let us consider $\C$ a $\add$-additive code of type $\type$ with a generator
matrix in standard form (\ref{eq:standard}), $C=\Phi(\C)$. For each quaternary
coordinate $\alpha+i$, with $i\in\{1,\dots,\beta\}$, we denote by
$\varphi_1(\alpha+i)$ and $\varphi_2(\alpha+i)$ the corresponding pair of binary
coordinates in $\{1,\dots,\alpha+2\beta\}$, that is,
$\varphi_1(\alpha+i)=\alpha+2i-1$ and $\varphi_2(\alpha+i)=\alpha+2i$.
We define the following sets of coordinate positions in $\{1,\ldots,\alpha+2\beta\}$:
\begin{itemize}
\item $J_1=\{1,\ldots,\kappa\}$, $|J_1|= \kappa$.
\item $J_2=\{j_1,\ldots,j_{\gamma-\kappa}\}$, where $j_i=
\varphi_1(\alpha+\beta+\kappa-\gamma-\delta+i)$, $|J_2|=\gamma-\kappa$.
\item
$J_3=\{\varphi_1(\alpha+\beta-\delta+1),\varphi_2(\alpha+\beta-\delta+1),\ldots,
\varphi_1(\alpha+\beta),\varphi_2(\alpha+\beta)\}$, $|J_3|=2\delta$.
\end{itemize}

We are going to show that $J=J_1\cup J_2 \cup J_3$ is a set of systematic
coordinates for the $\add$-linear code $C$. We shall refer to $J$ as the
\textit{standard information set} or \textit{standard set of systematic
coordinates}.

\medskip
Given an information vector
$a=(a_1,\ldots,a_{\gamma+2\delta})\in\F^{\gamma+2\delta}$, we consider the
representation $a=(b,c,d)$, where $b=(a_1,\ldots,a_\kappa)$,
$c=(c_1,\ldots,c_{\gamma-\kappa})=(a_{\kappa+1},\ldots,a_\gamma)$ and
$d=(a_{\gamma+1},\ldots,a_{\gamma+2\delta})$. Note that
$\Phi^{-1}(a)=(b,c,\Phi^{-1}(d))\in\Z_2^\gamma\times\Z_4^\delta$. Consider the
codeword $x=\Phi^{-1}(a)\G\in \Z_2^\alpha \times \Z_4^\beta$. For each $i\in \{1,\ldots,\gamma-\kappa\}$, define
\begin{equation}\label{eta}
\eta_i=\left\{\begin{array}{cl}
         0 & \mbox{ if } x_{j_i}=c_i, \\
         1 & \mbox{ otherwise,}
       \end{array}\right.
\end{equation}
where, following the notation given above, $J_2=\{j_1,\dots,j_{\gamma-\kappa}\}$.
Let $\eta=(\eta_1,\ldots,\eta_{\gamma-\kappa})$. Then, we consider the bijection
$\sigma:\;\Z_2^\gamma\times\Z_4^\delta\;\longrightarrow\;\Z_2^\gamma
\times\Z_4^\delta$ given by
$\sigma(\Phi^{-1}(a))=\sigma(b,c,\Phi^{-1}(d))=(b,c+\eta,\Phi^{-1}(d))$. It is
straightforward that the codeword $\sigma(\Phi^{-1}(a)) \G$ verifies that
$$\left(\Phi\left(\sigma(\Phi^{-1}(a))\G\right)\right)_{J}=(b,c,d)=a.$$

Since $|J|=\kappa+\gamma-\kappa+2\delta=\gamma+2\delta$, we conclude that $J$ is a set of systematic coordinates. Therefore we have proved the following theorem.

\begin{theorem}\label{systematic}
If $C$ is a $\add$-linear code of type $\type$, then $C$ is a systematic code. Moreover, if we assume that the generator matrix of $\C=\Phi^{-1}(C)$ is in standard form (\ref{eq:standard}), then $J=J_1\cup J_2\cup J_3$ is a set of systematic coordinates for $C$.
\end{theorem}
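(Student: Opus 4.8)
The plan is to reduce the statement to a surjectivity claim and then verify it block by block. Since $|J|=\kappa+(\gamma-\kappa)+2\delta=\gamma+2\delta=k$ and $C_J\subseteq\F^{\gamma+2\delta}$, it suffices to show that the restriction map $v\mapsto v_J$ sends $C$ \emph{onto} all of $\F^{\gamma+2\delta}$; surjectivity then forces $|C_J|=2^k=|C|$, which is exactly the systematic condition. Concretely, for each $a=(b,c,d)$ I would produce the codeword $\Phi\left(\sigma(\Phi^{-1}(a))\G\right)$ and check directly that its restriction to $J$ equals $a$, exploiting the block structure of the standard form (\ref{eq:standard}).

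First I would dispose of the two easy blocks. Writing $\Phi^{-1}(a)=(b,c,v)$ with $v=\Phi^{-1}(d)\in\Z_4^\delta$ and setting $x=(b,c,v)\G$, the leftmost column block $I_\kappa$ shows that the first $\kappa$ binary coordinates of $x$ equal $b$, so $\Phi(x)_{J_1}=b$; the rightmost column block $I_\delta$ shows that the last $\delta$ quaternary coordinates of $x$ equal $v$, so $\Phi(x)_{J_3}=\Phi(v)=d$. The crucial feature is that in both of these column blocks the rows carrying the order-two information $c$ contribute $\zero$, so these coordinates depend only on $b$ and $v$ and are insensitive to any later modification of $c$.

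The heart of the argument is the block $J_2$, consisting of the positions $\varphi_1$ of the $\gamma-\kappa$ quaternary coordinates governed by the central block $2I_{\gamma-\kappa}$. There the coordinate value of $x$ is $2c_i+(vR)_i\in\Z_4$, and under the Gray map its first binary component is $\lfloor(2c_i+(vR)_i)/2\rfloor$, which in general is \emph{not} $c_i$ because of the coupling term $(vR)_i$. This is the step I expect to be the main obstacle, and it is precisely what the correction vector $\eta$ is for. The key observation is that replacing $c_i$ by $c_i+\eta_i$ leaves $(vR)_i$ untouched (it depends only on $v$) and shifts the quaternary value by $2\eta_i\pmod 4$; since adding $2$ in $\Z_4$ flips the first Gray bit $\lfloor\,\cdot\,/2\rfloor$, the choice of $\eta_i$ in (\ref{eta}) repairs the $J_2$ coordinate to exactly $c_i$. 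Because $c$ enters the $J_1$ and $J_3$ blocks as $\zero$, this correction does not disturb the coordinates already verified.

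Finally I would record that $\sigma$ is a bijection, so $a\mapsto\Phi\left(\sigma(\Phi^{-1}(a))\G\right)$ is a well-defined encoding into $C$; moreover it is automatically one-to-one since its $J$-restriction is the identity. Combining the three block computations gives $\Phi\left(\sigma(\Phi^{-1}(a))\G\right)_J=(b,c,d)=a$ for every $a\in\F^{\gamma+2\delta}$, whence $C_J=\F^{\gamma+2\delta}$ and $J$ is a set of systematic coordinates of the required size.
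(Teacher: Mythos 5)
Your proposal is correct and follows essentially the same route as the paper: the paper also constructs the codeword $\Phi\left(\sigma(\Phi^{-1}(a))\G\right)$ with the correction vector $\eta$ from (\ref{eta}) and concludes from $|J|=\gamma+2\delta$ that $J$ is an information set. You merely spell out the block-by-block verification that the paper labels ``straightforward,'' including the key point that adding $2\eta_i$ in $\Z_4$ flips the first Gray bit without disturbing the $J_1$ and $J_3$ coordinates.
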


Note that in the case $\gamma=\kappa$ we have $a=(b,d)$, so $\eta$ is the
all-zero vector and hence $\sigma$ is the identity map.
Therefore, as a result, for $\gamma=\kappa$ we obtain the systematic encoding function
given in (\ref{eq:encoding0}).

\begin{corollary}
Let $C$ be a $\add$-linear code of length $n$, size $|C|=2^k$ and such that $\C=\Phi^{-1}(C)$ has generator matrix in standard form (\ref{eq:standard}). Then, the function
$f:\;\F^k\;\longrightarrow\;\F^n$ defined as
\begin{equation} \label{eq:encoding}
f(a)=\Phi\left(\sigma\left(\Phi^{-1}(a)\right)\G\right),\;\;\;\;\forall a\in\F^k
\end{equation}
is a systematic encoding for $C$ and the information set $J$.
\end{corollary}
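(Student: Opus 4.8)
The plan is to verify directly the three defining properties of a systematic encoding, invoking Theorem~\ref{systematic} for the main step. By definition, $f$ is a systematic encoding for $J$ exactly when (i) $f(a)\in C$ for every $a\in\F^k$, (ii) $f(a)_J=a$ for every $a\in\F^k$, and (iii) $f$ is one-to-one.

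First I would establish (i). Since $\sigma$ is a bijection of $\Z_2^\gamma\times\Z_4^\delta$ onto itself, $\sigma(\Phi^{-1}(a))$ is again a vector of that space; multiplying it by the generator matrix $\G$ produces a codeword of the additive code $\C$, and applying $\Phi$ sends that codeword into $\Phi(\C)=C$. Hence $f(a)\in C$ for all $a$. The only point deserving attention is that $\sigma$ genuinely is a bijection despite the shift $\eta$ being built from $a$: each $\eta_i$ in (\ref{eta}) compares the bit $c_i$ with the coordinate $x_{j_i}$ of the Gray image of $x=\Phi^{-1}(a)\G$, and a short Gray-map computation shows that, for each fixed $b$ and $d$, the assignment $c\mapsto c+\eta$ is a bijection of $\Z_2^{\gamma-\kappa}$ (indeed $\eta$ depends only on $d$, so it is simply a translation). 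Since $\sigma$ fixes both $b$ and $d$, it follows that $\sigma$ is a bijection of the whole space.

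Property (ii) requires no new work: it is exactly the identity $\big(\Phi(\sigma(\Phi^{-1}(a))\G)\big)_J=a$ established in the proof of Theorem~\ref{systematic}, and the right-hand side of (\ref{eq:encoding}) is precisely this vector, so $f(a)_J=a$ for all $a$. Property (iii) is then immediate: if $f(a)=f(a')$, then $a=f(a)_J=f(a')_J=a'$, so $f$ is one-to-one. Combined with (i) and $|\F^k|=2^k=|C|$, this even shows that $f$ is a bijection of $\F^k$ onto $C$ whose inverse is the restriction to $J$. The main --- and only mildly delicate --- obstacle in the whole argument is confirming that $\sigma$ is a well-defined bijection; everything else is a direct unwinding of the definitions together with an appeal to Theorem~\ref{systematic}.
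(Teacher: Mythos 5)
Your proof is correct and follows essentially the same route as the paper: property (ii) is exactly the identity $\left(\Phi\left(\sigma(\Phi^{-1}(a))\G\right)\right)_{J}=a$ computed just before Theorem~\ref{systematic}, and injectivity and membership in $C$ follow as you say. Your additional check that $\eta$ depends only on $d$ (so that $c\mapsto c+\eta$ is a translation and $\sigma$ a bijection) correctly justifies a fact the paper asserts without proof, although strictly speaking only the well-definedness of $\sigma$, not its bijectivity, is needed for the corollary.
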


The following example shows that the set of systematic coordinates is not unique, in general.

\begin{example}
Consider the $\add$-additive code $\C$ generated by
$$
\G=\left(
    \begin{array}{cc|ccc}
      1 & 1 & 2 & 0 & 0 \\
      0 & 0 & 2 & 2 & 0 \\
      \hline
      0 & 1 & 1 & 1 & 1 \\
    \end{array}
  \right).
$$
Let $C=\Phi(\C)$ be the corresponding $\add$-linear code in $\F^8$. A set of systematic coordinates is $\{2,4,6,8\}$. However, the standard set of systematic coordinates would be $\{1,5,7,8\}$.
\end{example}

Note that this encoding method requires, in some cases, two products by the generator matrix. However, this is not a meaningful change of complexity order.

\section{An alternative permutation decoding algorithm}\label{sec:method}

In this section we are going to see that the usual permutation decoding algorithm can be applied to $\add$-linear codes just in a few cases. This is because, even if we find a PD-set, Theorem~\ref{teosindrome} can not be used in general. We shall present an alternative permutation decoding algorithm where Theorem~\ref{newteosindrome} replaces Theorem~\ref{teosindrome}.

Let $C$ be a $t$-error correcting $\add$-linear code with information set $J$. Let $\C=\Phi^{-1}(C)$ be the corresponding $\add$-additive code of type $\type$. On the one hand, we have seen that if $C$ is linear then the usual systematic encoding can be applied considering the matrices as in (\ref{eq:matrius}). So Theorem~\ref{teosindrome} works.

On the other hand, if $\gamma=\kappa$ then we have seen that we can assume that
$\C$ has a parity check matrix $\H$ containing the identity matrix (see
(\ref{eq:free})). Then, denote the received vector $y=x+e\in
\F^{\alpha+2\beta}$, where $x\in C$ and $e$ is the error vector. It is easy to
see that we may adapt Theorem~\ref{teosindrome} to this context, that is, we
have that, under the condition $\pes(e)\leq t$,
\begin{equation}\label{teosindromez2z4}
\pesL(\H \Phi^{-1}(y)^T)=\pesL(\H \Phi^{-1}(x)^T)\leq
t\;\;\Longleftrightarrow\;\;\pes(e_J)=0.
\end{equation}
where, recall that $\pesL()$ represents the Lee weight. The following result shows that in the nonlinear case, (\ref{teosindromez2z4}) holds if and only if $\gamma=\kappa$.

\begin{proposition}
Let $\C$ be a $t$-error-correcting $\add$-additive code of length $n$, type
$\type$ and parity check matrix $\H$, such that $C=\Phi(\C)$ is a binary
nonlinear code. Then, $C$ satisfies (\ref{teosindromez2z4}) if and only
if $\gamma=\kappa$.
\end{proposition}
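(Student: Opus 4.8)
The plan is to prove the two implications separately, using that $\kappa\le\gamma$ always holds for a type $\type$, so that the negation of $\gamma=\kappa$ is simply $\gamma>\kappa$. The implication $\gamma=\kappa\Rightarrow(\ref{teosindromez2z4})$ is already contained in the discussion preceding the statement: when $\gamma=\kappa$ the additive code admits the parity check matrix in (\ref{eq:free}), carrying the identity blocks $I_{\alpha-\kappa}$ and $I_{\beta-\delta}$ over exactly the non-systematic coordinates, and the argument of Theorem~\ref{teosindrome} then transfers to the Lee-weight syndrome. Hence the whole content lies in the converse, which I would prove in contrapositive form: assuming $\gamma>\kappa$, I would exhibit two admissible error vectors violating (\ref{teosindromez2z4}).

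For the construction I would exploit the block $2I_{\gamma-\kappa}$ of (\ref{eq:standard}), present precisely because $\gamma>\kappa$. Its columns occupy the quaternary positions $q_i=\alpha+\beta+\kappa-\gamma-\delta+i$ for $i=1,\dots,\gamma-\kappa$, and these are exactly the positions whose first binary coordinate $\varphi_1(q_i)$ constitutes $J_2$. Fix one such $q=q_i$. Comparing against the three parts of $J=J_1\cup J_2\cup J_3$ shows $\varphi_1(q)\in J_2\subseteq J$ while $\varphi_2(q)\notin J$: it lies neither in $J_1$, nor in $J_2$ (which uses only $\varphi_1$), nor in $J_3$, whose quaternary positions all exceed $\alpha+\beta-\delta\ge q$. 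Now take $x=\zero$ and let $e$ be the weight-one vector with a $1$ in coordinate $\varphi_2(q)$ and $e'$ the weight-one vector with a $1$ in coordinate $\varphi_1(q)$. Reading off the Gray map gives, in the single quaternary coordinate $q$ (all others zero), $\Phi^{-1}(e)=\phi^{-1}(0,1)=1$ and $\Phi^{-1}(e')=\phi^{-1}(1,0)=3$, so that $\Phi^{-1}(e')=3\,\Phi^{-1}(e)=-\Phi^{-1}(e)$.

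The contradiction then follows from a single observation: the Lee weight over $\Z_4$ is invariant under negation, whence
$$\pesL\!\big(\H\,\Phi^{-1}(e')^T\big)=\pesL\!\big(-\H\,\Phi^{-1}(e)^T\big)=\pesL\!\big(\H\,\Phi^{-1}(e)^T\big).$$
Both errors have weight $1\le t$, yet $\pes(e_J)=0$ while $\pes(e'_J)=1\neq0$. If (\ref{teosindromez2z4}) held, applying it to $e$ would force the common syndrome weight to be $\le t$, whereas applying it to $e'$ would force the same quantity to be $>t$, which is impossible. Therefore $\gamma>\kappa$ implies that (\ref{teosindromez2z4}) fails, as required.

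I expect the only delicate points to be bookkeeping: checking that $q$ is a legitimate quaternary position (i.e. $\alpha+1\le q\le\alpha+\beta$, which follows from $\beta+\kappa-\gamma-\delta\ge0$) and that $\varphi_2(q)$ escapes all three parts of $J$. The conceptual heart is that the coordinates in $J_2$ are exactly the \emph{half-covered} quaternary positions, with $\varphi_1$ inside $J$ and $\varphi_2$ outside, and such positions exist if and only if the $2I_{\gamma-\kappa}$ block is nonzero, i.e. $\gamma>\kappa$; this is why the failure is pinned to $\gamma>\kappa$ and why the $\gamma=\kappa$ case, where $J_2=\emptyset$, behaves well. Because the two Gray preimages $1$ and $3$ at a single coordinate are negatives, they yield syndromes of equal Lee weight while lying on opposite sides of $J$, and the whole argument is independent of the particular entries $T_b,T_1,T_2,R,S_b,S_q$. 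One caveat to record is that the construction needs $t\ge1$; for a $0$-error-correcting code (\ref{teosindromez2z4}) is vacuous and the equivalence degenerates.
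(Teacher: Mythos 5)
Your proof is correct, but the route you take for the hard implication is genuinely different from the paper's. The paper also reduces to showing that $\gamma>\kappa$ breaks (\ref{teosindromez2z4}), but it does so by explicit computation with the parity check matrix: it builds a weight-$t$ error $e$ supported off $J$ that meets $L_3$, the set of $\varphi_2$-halves of the quaternary positions underlying $J_2$, uses the $2I_{\gamma-\kappa}$ block of $\H$ in (\ref{eq:H}) to force the syndrome's Lee weight above $t$ (splitting the syndrome over $L_1$ and $L_2\cup L_3$, and treating the case where the error meets $L_2$ separately by passing to a weight-$(t-1)$ error), and thereby refutes the implication $\pes(e_J)=0\Rightarrow\pesL(\H\,\Phi^{-1}(y)^T)\le t$ directly. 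You instead exploit a symmetry: the two Gray halves $\varphi_1(q)$, $\varphi_2(q)$ of a single quaternary position $q$ underlying $J_2$ give weight-one errors whose quaternary preimages at $q$ are $3$ and $1=-3$, hence syndromes of equal Lee weight, while one error meets $J$ and the other does not, so the biconditional would force that common weight to be both $\le t$ and $>t$. Your argument is shorter, needs no case analysis, and is independent of the entries of $\H$ (indeed of which parity check matrix is chosen); the price is that it refutes only the conjunction of the two implications, whereas the paper exhibits a concrete error on which the syndrome test gives the operationally wrong answer (declaring errors in $J$ when there are none), which is the failure mode relevant to the decoding algorithm. Your choice $x=\zero$ also neatly sidesteps any worry about the Gray map not commuting with binary addition. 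Both proofs implicitly need $t\ge 1$ (the paper's error must have weight exactly $t\ge 1$), so the caveat you record is shared and not a defect of your argument relative to the paper's.
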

\medskip

\begin{proof}

 The case $\gamma=\kappa$ have been discussed above. So assume $\gamma>\kappa$.

 Denote by $e_i$ the binary vector of length $n$ which has weight one and its
nonzero coordinate is at position $i$ ($1\leq i \leq n$). Define the three binary coordinate sets:
\begin{itemize}
 \item $L_1=\{\kappa+1,\cdots,\alpha\}$,
 \item $L_2=\{\varphi_1(\alpha+1),
\varphi_2(\alpha+1), \ldots,
\varphi_1(\alpha+\beta+\kappa-\gamma-\delta),
\varphi_2(\alpha+\beta+\kappa-\gamma-\delta)\}$,
\item $L_3=\{j_1,\ldots,j_{\gamma-\kappa}\}$, where $j_i=
\varphi_2(\alpha+\beta+\kappa-\gamma-\delta+i)$.
\end{itemize}

We have that $L=L_1\cup L_2\cup L_3=\{1,\dots,n\}\setminus J$. Consider an error
vector $e\in\F^n$ such that $\pes(e)=t$, $\pes(e_J)=0$ and $\pes(e_{L_3})\not=0$.
By the definition of $\H$, (\ref{eq:H}), it is easy to check that
for $k_1,\dots,k_r\in L_3$ we have that $\pesL(\H \Phi^{-1}(e_{k_1}+\dots+
e_{k_r})^T)\geq
2r$, and $\pesL(\H \Phi^{-1}(e)^T)=\pesL(\H \Phi^{-1}(\varepsilon_1)^T)+\pesL(\H
\Phi^{-1}(\varepsilon_2)^T)$, where $\varepsilon_1=(\varepsilon_1^1,\dots,\varepsilon_1^n)\in\F^n$ is given by $(\varepsilon_1)_{L_1}=e_{L_1}$, $\varepsilon_1^i=0$ if $i\notin L_1$, and $\varepsilon_2=(\varepsilon_2^1,\dots,\varepsilon_2^n)\in\F^n$ is given by $(\varepsilon_2)_{L\setminus L_1}=e_{L\setminus L_1}$, $\varepsilon_2^i=0$ if $i\in L_1$. Hence, if
$\pes(e_{L_2})=0$, we obtain $\pesL(\H
\Phi^{-1}(e)^T)> t$. In the case $\pes(e_{L_2})\not=0$, there exist $i\in
L_2$ such that  $\pes(e+e_i)=t-1$ and $\pesL(\H \Phi^{-1}(e+e_i)^T)>t$. In both
cases we have that $C$ does not satisfy (\ref{teosindromez2z4}).
\end{proof}
\medskip


\begin{theorem}\label{newteosindrome}
Let $C$ be a binary systematic $t$-error-correcting code of length $n$.
Let $I$ be a set of systematic coordinates and
let $f$ be a systematic encoding for $I$. Suppose that
$y=x+e$ is a received vector, where $x\in C$ and ${\rm wt}(e)\leq t$. Then, the
systematic coordinates of $y$ are correct, i.e. $y_I=x_I$, if and only if
${\rm wt}\left(y+f(y_I)\right)\leq t$.
\end{theorem}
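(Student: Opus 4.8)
The plan is to prove the two implications separately, relying on two structural facts. The first is that the systematic encoding $f$ is in fact a bijection from $\F^k$ onto $C$: since $f$ is one-to-one its image lies in $C$, and $|\F^k|=2^k=|C|$ forces $f(\F^k)=C$. The second is that the $t$-error-correcting hypothesis gives $d_C\geq 2t+1$, because $t=\lfloor(d_C-1)/2\rfloor$.

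Before either direction I would record the key identity $f(c_I)=c$ for every codeword $c\in C$. Because $C$ is systematic, the restriction map $c\mapsto c_I$ is a bijection $C\to\F^k$, so $c$ is the unique codeword whose $I$-coordinates equal $c_I$; as $f(c_I)\in C$ and $(f(c_I))_I=c_I$ by the defining property of $f$, uniqueness yields $f(c_I)=c$. The forward implication is then immediate: if $y_I=x_I$, then $f(y_I)=f(x_I)=x$, so $y+f(y_I)=y+x=e$ and hence $\pes(y+f(y_I))=\pes(e)\leq t$.

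For the converse I would set $x'=f(y_I)\in C$ and assume $\pes(y+x')\leq t$. Writing $x+x'=(x+y)+(y+x')$ and applying the triangle inequality for the Hamming distance gives $d(x,x')=\pes(x+x')\leq\pes(e)+\pes(y+x')\leq 2t<d_C$. Since $x$ and $x'$ are both codewords separated by a distance strictly less than the minimum distance, they must coincide; hence $f(y_I)=x$, and taking $I$-coordinates gives $y_I=(f(y_I))_I=x_I$, as required.

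The argument is elementary and presents no serious obstacle; the only point needing care is the uniqueness identity $f(c_I)=c$, which is precisely what lets one pass between the abstract encoding map $f$ and the ambient codeword. The converse direction depends essentially on the strict inequality $2t<d_C$, so the $t$-error-correcting hypothesis cannot be dropped: without it two distinct codewords could lie within Hamming distance $2t$ and the equivalence would break down. Note that, unlike Theorem~\ref{teosindrome}, this argument never invokes linearity or a parity check matrix, which is exactly why it remains valid for general $\add$-linear codes.
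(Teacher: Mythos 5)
Your proof is correct and follows essentially the same route as the paper's: the forward direction uses $f(x_I)=x$ and the converse uses that two codewords within distance $2t<d_C$ must coincide (the paper phrases this as ``$f(y_I)$ is the closest codeword to $y$''). You merely make explicit two steps the paper leaves implicit, namely the identity $f(c_I)=c$ for codewords and the triangle inequality.
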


\begin{proof}
If $\pes\left(y+f(y_I)\right)\leq t$, then $f(y_I)$ is the closest codeword to
$y$, that is, $f(y_I)=x$. Hence the systematic coordinates are the same $y_I=x_I$.

If $x_I=y_I$, then $\pes\left(y+f(y_I)\right)=\pes (y+x)=\pes(e)\leq t$.
\end{proof}

\medskip

Now, let us consider $\C$ a $\add$-linear code with information set $I$. Assume that $S\subseteq \mbox{PAut}(C)$ is a PD-set for $I$ and $y$ is a received vector. As an alternative method with respect to the algorithm described in Section~\ref{introduction} we can use the following decoding process:

\begin{enumerate}
\item If $\pes\left(y+f(y_I)\right)\leq t$, then $x=f(y_I)$ is the decoded
vector and $y_I$ is the information vector.
\item Else, we search $\pi\in S$ such that
$\pes\left(\pi(y)+f\left(\pi(y)_I\right)\right)\leq t$. If there is
no such $\pi$, we conclude that more than $t$ errors have occurred.
\item If we have successfully found $\pi$, then the decoded vector is
$$x=\pi^{-1}\left(f\left(\pi(y)_I\right)\right).$$
\end{enumerate}

Note that $\pes\left(\pi(y)+f\left(\pi(y)_I\right)\right)\leq t$
implies that $f\left(\pi(y)_I\right)$ is the closest codeword to
$\pi(y)$. Therefore, the closest codeword to $y$ is
$\pi^{-1}\left(f\left(\pi(y)_I\right)\right)$.

\begin{example}

Consider the $\add$-additive code $\C$ with generator and parity check matrices:
$$ \G=\left (\begin{array}{cccc}
\hline
3&2&1&0\\
2&3&0&1 \end{array} \right ), \quad \quad  \H=\left (\begin{array}{cccc}
\hline
1&0&1&2\\
0&1&2&1 \end{array} \right ).$$

The corresponding $\add$-linear code $C=\Phi(\C)$ is a 1-error-correcting code of type $(0,4;0,2;0)$ (i.e., $C$ is a $\Z_4$-linear code). In fact, $C$ is a Hadamard $\Z_4$-linear code \cite{Kro01}. Let $\vartheta=(1,3,5,7)(2,4,6,8)$. It is straightforward to check that $\vartheta\in \mbox{PAut}(C)$ (note that $\C$ is a quaternary cyclic code) \cite{PePuVi12}. Moreover, $S=\{id,\vartheta,\vartheta^2\}$ is a PD-set for the standard information set $I=\{5,6,7,8\}$.
Since $\gamma=\kappa$, we can use the systematic encoding $f$ defined in
(\ref{eq:encoding0}).
%

For example, let $a=(0,1,0,1)\in \F^4$ be an information vector. Then
$$
x=f(a)=\Phi\left(\Phi^{-1}(a)\G\right)=\Phi\left((1,1)\G\right)=\Phi(1,1,1,1)=(0,1,0,1,0,1,0,1).
$$
Suppose now that the received vector is $y=x+e$, where $e=(0,0,0,0,0,0,0,1)$. The syndrome of $y$ is
$$
\Phi\left(\H\Phi^{-1}(y)^T\right)=\Phi\left(\H(1,1,1,0)^T\right)=\Phi((2,3)^T)=(1,1,1,0)^T,
$$
which has weight $3> t=1$. However, considering the vector $z=\vartheta(y)=(0,0,0,1,0,1,0,1)$, we have that the syndrome is
$$
\Phi\left(\H\Phi^{-1}(z)^T\right)=\Phi((3,0)^T)=(1,0,0,0)^T,
$$
which has weight $1\leq t=1$. Therefore, the systematic coordinates of $z$ have no errors. Hence, we decode $y$ as
\begin{eqnarray*}
\vartheta^{-1}\left(\Phi(\Phi^{-1}(z_I)\G)\right) &=& \vartheta^{-1}\left(\Phi((1,1)\G\right))=\vartheta^{-1}(\Phi(1,1,1,1))\\
&=& (0,1,0,1,0,1,0,1) =x,
\end{eqnarray*}
and the information vector is $x_I=(0,1,0,1)$.

\end{example}

\begin{example}
Consider the $\add$-additive code $\C$ with generator matrix:
$$ \G=\left (\begin{array}{cccccccc}
2&2&2&0& 0&2&0&0 \\ \hline
3&2&1&2& 3&0&1&0 \\
2&3&0&3& 2&1&0&1 \end{array} \right ).$$

The corresponding $\add$-linear code $C=\Phi(\C)$ is a 3-error-correcting code of type $(0,8;1,2;0)$ (i.e., $C$ is a $\Z_4$-linear code). In fact, $C$ is also a Hadamard $\Z_4$-linear code \cite{Kro01}. We know that $\langle  \vartheta_1, \vartheta_2, \vartheta_3, \vartheta_4 \rangle \subseteq  \mbox{PAut}(C)$ \cite{PePuVi12}, where
\begin{equation}
\begin{split}
\vartheta_1=& (1,5)(2,6)(3,11)(4,12)(9,13)(10,14)(7,15)(8,16),\\
\vartheta_2=& (1,3,5,11)(2,4,6,12)(9,7,13,15)(10,8,14,16),\\
\vartheta_3=& (9,13)(10,14)(7,15)(8,16), \\
\vartheta_4=& (1,9)(2,10)(5,13)(6,14).
\end{split}
\end{equation}
Moreover, it is easy to check using the {\sc Magma} software package \cite{M1} that we can take the elements in the subgroup $S=\langle \vartheta_1, \vartheta_2,
\vartheta_4 \rangle$ as a PD-set for the information set $I=\{11,13,14,15,16\}$.
In this case, we can not use the standard permutation decoding, since $\gamma\not =\kappa$. However, we can still
perform a permutation decoding using the alternative method presented in this section.

For example, let $a=(1,1,1,1,1) \in \F^5$ be an information vector. Using the systematic encoding given by (\ref{eq:encoding}),
the corresponding codeword is
\begin{eqnarray*}
x &=&
f(a)=\Phi\left(\sigma(\Phi^{-1}(a))\G\right)=\Phi\left((1+\eta_1,2,2)\G\right)
\\
  &=& \Phi(2,2,2,2,2,2,2,2)=(1,1,1,1,1,1,1,1,1,1,1,1,1,1,1,1),
\end{eqnarray*}
where $\eta=(\eta_1)=(1)$. Suppose now that the received vector is $y=x+e$,
where $e=(0,0,0,0,0,0,0,0,0,0,0,0,1,0,1,1)$.
By considering the standard information set, the information coordinates of $y$ are $y_I=(1,0,1,0,0)$ and
$$
f(y_I)=\Phi(\sigma(\Phi^{-1}(y_I))\G )=(0,1,0,0,1,0,1,1,1,0,1,1,0,1,0,0),
$$ so $\pes(y+f(y_I))=5>t=3$. However, considering the vector $z=\vartheta_1(y)=(1, 1, 1, 1, 1, 1, 0, 0, 0, 1, 1, 1, 1, 1, 1, 1)$,
we have that  $z_I=(1,1,1,1,1)$ and
$$f(z_I)= \Phi(\sigma(\Phi^{-1}(z_I))\G )=(1,1,1,1,1,1,1,1,1,1,1,1,1,1,1,1),$$
so $\pes(z +f(z_I))=3\leq t=3$.
Therefore, the systematic coordinates of $z$ have no errors. Hence, we decode $y$ as
$\vartheta_1^{-1}(f(z_I))=x$ and the information vector is $x_I=(1,1,1,1,1)$.
\end{example}


\section*{Acknowledgements}
The authors thanks Prof. J. Rif\`{a} for valuable discussions in an earlier version of this paper.

\end{document}